\documentclass[conference]{IEEEtran}

\ifCLASSINFOpdf

\else

\fi

\hyphenation{op-tical net-works semi-conduc-tor}

% \usepackage[top=0.99in, bottom=1in, left=0.63in, right=0.63in]{geometry}
% \linespread{0.97}

%\documentclass[twocolumn]{article}
% \documentclass[lnicst]{svmultln}
%\usepackage{times}
\usepackage{amsmath}
\usepackage[normalem]{ulem}
\usepackage{soul} % use this (many fancier options)
\usepackage{paralist}

\usepackage{amssymb}
\setcounter{tocdepth}{3}
\usepackage{graphicx}
\usepackage{epstopdf}
\usepackage{subfig}
\usepackage{multirow}
\epstopdfsetup{outdir=./}
\usepackage{enumerate}
\usepackage{algorithm}
\usepackage{algorithmicx}
\usepackage{algpseudocode}

\usepackage{xcolor}

\DeclareGraphicsRule{.tif}{png}{.png}{`convert #1 `dirname #1`/`basename #1 .tif`.png}

\newtheorem{lemma}{Lemma}

\def\F{\mathcal{F}}
\def\N{\mathcal{N}}

\def\Prob{\mathbb{P}}

\def\etc{\emph{etc. }}
\def\eg{\emph{e.g. }}
\def\ie{\emph{i.e. }}

\IEEEoverridecommandlockouts

\begin{document}
%
% paper title
% can use linebreaks \\ within to get better formatting as desired
\title{Proportional Fair Rate Allocation for Private Shared Networks\thanks{This work was supported by Science Foundation Ireland under Grants 11/PI/1177 and 13/RC/2077}}

% author names and affiliations
% use a multiple column layout for up to three different
% affiliations
% \IEEEauthorrefmark{2}

\author{\IEEEauthorblockN{Saman Feghhi,\ Douglas J. Leith,\ Mohammad Karzand}
\IEEEauthorblockA{School of Computer Science and Statistics, Trinity College Dublin\\
Email: \{feghhis,doug.leith,karzandm\}@tcd.ie}}

% make the title area
\maketitle

\IEEEpeerreviewmaketitle

\begin{abstract}
In this paper, we consider fair privacy in a shared network subject to traffic analysis attacks by an eavesdropper.  We initiate the study of the joint trade-off between privacy, throughput and delay in such a shared network as a utility fairness problem and derive the proportional fair rate allocation for networks of flows subject to privacy constraints and delay deadlines.  

%\sfnote{In this paper, we introduce a novel traffic shaping defense against timing-only attacks and identify the tradeoff between different measures of the quality of service in a system. Based on this defense we propose a proportional fair model for shared networks with privacy constraints, which allows us to investigate the shared tradeoff between privacy, delay and throughput. By determining a fully private proportional fair rate allocation for flows subject to delay deadlines, we can design a scheduler for Virtual Private Networks (VPNs).}
\end{abstract}

\section{Introduction}
\label{sec:intro}
\noindent
Privacy in packet switched networks has attracted increasing interest in the research community over the last decade. Traditionally encryption techniques have been used to provide privacy by hiding the content of communication messages from an eavesdropper.  However, more sophisticated attacks have been developed in recent years that use characteristics other than the content of the packets to gain information about the content of transmissions.  For example, it has been demonstrated that packet length, count and the inter arrival times between packets can be used to infer with high accuracy the identity of web pages being browsed {\cite{feghhi15}}, video being watched and the content of encrypted voice calls {\cite{white11,wright08}}.	For attacks which make use of packet length information the obvious defence is to pad/fragment packets to be of fixed size.  However, this simple defence is insufficient to protect against timing only attacks, since these make no use of packet size information.  Note that timing only attacks can be powerful e.g. in 
{\cite{feghhi15,feghhi16}} it is demonstrated that the web site being browsed can be successfully inferred with greater than 90\% accuracy using timing information alone.   

Timing attacks make use of the properties of the packet \emph{stream}, rather than of individual packets, and defence against such attacks therefore requires use of traffic shaping to modify the timing pattern of the stream of packets transmitted over the network such that it becomes hard for an eavesdropper to learn about the original message.   Such traffic shaping can be achieved by inserting dummy packets into the packet stream to mask idle periods, by delaying/buffering packets to modify their timing and of course by dropping packets.  In practice this shaping might be performed by, for example, end hosts or by a VPN gateway. 

\begin{figure}
\centering
\includegraphics[width=0.6\columnwidth]{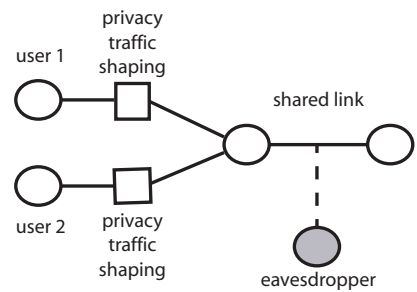}
\caption{Schematic illustrating simple two flow example.}\label{fig:intro}
\end{figure}

However, such privacy enhancing traffic shaping can impose a cost on the user and on the network.  For example, insertion of dummy packets increases the load on the network while buffering user packets for longer increases delay but can reduce the need for dummy packets to ensure privacy.   Importantly, for users sharing common a network path this creates a joint trade-off between their privacy, throughput and delay.  This is illustrated in Figure \ref{fig:intro} which shows two users whose traffic is shaped to enhance privacy and is then sent over a shared link which may be subject to eavesdropping.   An increase in the rate of dummy packet transmissions by user 1 reduces the available bandwidth for user 2 on the shared link, and so to maintain privacy may require user 2 to reduce the rate at which useful (non-dummy) packets are sent, to buffer packets for longer (to disrupt timing without adding extra dummy packets) or to drop more packets (again to disrupt timing patterns).  Alternatively, user 2 may choose to sacrifice privacy in order to avoid reducing throughout, increasing delay \etc   That is, increased privacy for one user may come at the cost of decreased throughput, increased delay and/or reduce privacy for another user.  
 %\sfnote{As an example consider 2 users co-existing in a shared network, the scheduler of which allows each user to occupy half of the capacity to assure fairness. Let the offered load of users occupy 0.5 and 0.4 of total capacity respectively. Since user B is performing under capacity he can use the extra allowance to transmit dummy traffic and distort the shape of his traffic.  This however is not the case for User A, being already on maximum capacity. In this case in order to preserve fairness, there are two options available for user A: either he can use up his total capacity and transmit all of his traffic with no modification, which would reduce his privacy; or he can dedicate a part of his allowance to dummy traffic and distort his traffic, (queuing the rest of traffic) which would add delay. A third option is that both users use the remainder of total capacity to modify their traffic, but this would then interfere with the fairness. So there is a clear tradeoff between privacy, throughput and delay when users are competing for occupancy over a shared network.}

In this paper, we initiate the study of the joint trade-off between privacy, throughput and delay in a shared network as a utility fairness problem.   We derive the proportional fair rate allocation for networks of flows subject to privacy constraints and delay deadlines.  To the best of our knowledge this is the first study of \emph{fair} privacy in a shared network subject to timing attacks.

%The topic of private transmission has become an interest over last decade.	The idea is to use a traffic shaper to transform messages that are being transmitted over the network, so they contain no information about the original messages.	 Encryption was introduced as a primary solution to hide the content of messages from an adversary who performs man in the middle (MitM) attacks. However many attacks has been introduced which does not require to know the inside of individual messages but instead use characteristics of a packet sequence \eg packet length, count, inter-arrival times \etc to guess the content of a transmission. While defences against less sophisticated traffic analysis attacks which use only packet length and count are often as inexpensive as fixing packet sizes, defending against timing-only attacks, requires modifications in time patterns which in turn affects user experience by imposing delay and increasing loss rate.	Moreover in a shared environment with limited allocated air time, users desire to maximise their efficiency by using larger portion of their airtime sending real traffic.	To this end, there is  a trade off between quality of experience (QoE) and the level of privacy for users in a shared network.

\section{Related Work}
\label{sec:related}
\noindent
% \textbf{**related work here all relates to delay etc, not fairness.  is there fairness related privacy work we can cite ?  }
{There exists a large body of research focused on fair rate allocation in shared networks but without consideration of privacy, see for example \cite{eryilmaz05,massoulie99,banchs07} and references therein.   With regard to privacy, fairness has previously been considered in the context of mix networks \cite{chaum81} where the aim is to achieve unlinkability/anonymity.  Mishra et al. \cite{mishra12} propose a proportional fair scheduling model that preserves source anonymity and in \cite{mishra15} investigate the trade-off between anonymity and quality of service. The focus in both papers is on hiding the identity of the users in a shared network.}   With regard to traffic analysis and associated defences, previous studies have established that the packet size, count, transmission rate, inter-arrival times \etc can reveal significant information about a traffic flow, see e.g. \cite{white11,dyer12,feghhi15,feghhi16} and references therein.  A number of defences have been proposed to counter these traffic analysis attacks, mostly addressing attacks that make use of packet size, count and the length of transmission \eg they modify the traffic by padding the packets to be the same length or by adding dummy traffic to the end of the packet sequence to mask the packet count \cite{guan01}.  These defences are, however, ineffective against attacks that solely use timing information of network flows, see for example \cite{feghhi15} and \cite{feghhi16}.  To the best of our knowledge none of this previous work on defences against traffic analysis attacks has considered fairness or the joint aspect of the trade-off between privacy, throughput and delay in a shared network. %While defenses against timing only attacks are impervious against most defenses available in the literature, their overhead on delay must be taken into account.
%Rather than an attempt to infer contents of a flow a different type of attack focuses on linkability of end users in a connection. One of the earliest defenses against this type of attack is mix anonymity introduced by Chaum \cite{chaum81} which was then improved by random delaying \cite{kesdogan98} and dummy packet transmission \cite{pfitzmann91}. A defense against a timing attack based on similar concept is studied in \cite{ghaderi10} where the delay and loss is also analysed.
%\textbf{**maybe worth commenting on chaum mix networks since they are so well known due to Tor -- they target a different attack from us since they aim for unlinkability but there has been some work on managing delay and loss for such networks e.g. by Srikant.}

\section{Attack Model \& Privacy for a Network Flow}
\label{sec:notionprivacy}
\noindent
%\subsection{Traffic Sources}
%\noindent
We consider a network path where time $t\in\{0,1,\cdots\}$ is slotted and in slot $t$ a new packet may arrive for transmission across the network. We let random variable $X(t)=1$ if a packet arrives in slot $t$ and $X(t)=0$ otherwise.    It is assumed that packets themselves contain no interesting information for an eavesdropper, which means that the packets are strongly encrypted and are of constant size.  However, the sequence $\mathbf{X}:=\{X(t)\}$  of packet arrivals contains information that can be used to reveal the characteristics of the traffic flow.   To protect this information, the arriving packet stream is passed through a traffic shaper before being transmitted across the network.   We let random variable $Y(t) = 1$ if a packet is transmitted across the path in slot $t$ and $Y(t) = 0$ otherwise.   
%\subsection{Measuring Privacy}
%\label{sec:privacy}
%\noindent

Our attack model is that the sequence $\mathbf{Y}:=\{Y(t)\}$ of transmissions is observable by an adversary, but not the sequence $\mathbf{X}$ of arrivals.  Note that in general we expect that an eavesdropper listening to network transmissions does not observe output sequence $\mathbf{Y}$ directly, but rather only after transformation by the MAC layer \ie after buffering, scheduling delay \etc   This will generally increase privacy (this follows from the data processing inequality, and more specifically \eg passing through a queue increases entropy).  Nevertheless, we take a conservative approach and assume a strong attacker who can perfectly invert these changes and so recover $\mathbf{Y}$.

We use the mutual information $I(\mathbf{Y}; \mathbf{X})$ between sequences $\mathbf{X}$ and $\mathbf{Y}$ as our measure of privacy, similarly to \cite{wu06} and others. Mutual information measures the capacity of the information channel between sequences $\mathbf{X}$ and $\mathbf{Y}$.  When $I(\mathbf{Y}; \mathbf{X})=0$ the sequences are statistically independent and more generally mutual information captures the exposure to watermarking attacks, which can be thought of as a worst case situation in our setup, namely where the user sequence $\mathbf{X}$ has a signature intentionally embedded within its timing pattern by an adversary (e.g. by a web site which is being downloaded)
Recall that $I(\mathbf{Y}; \mathbf{X}) = H(\mathbf{Y})-H(\mathbf{Y}|\mathbf{X})$, where $H(\mathbf{Y})$ is the entropy of the output sequence and $H(\mathbf{Y}|\mathbf{X})$ the conditional entropy between the output and input sequences.    Privacy is maximised when $I(\mathbf{Y}; \mathbf{X})$ is minimised (as already noted, when $I(\mathbf{Y}; \mathbf{X})=0$, the output sequence $\mathbf{Y}$ is statistically independent from the input sequence $\mathbf{X}$ and we say transmissions are fully private).  

% It can be proved that for any $I(X; Y) > 0$ and given enough time, the adversary can infer details about the distribution of $X$. To this end we focus on configurations with full privacy where $I(X; Y) = 0$. This is achieved for $q = 1$ which means that all empty slots in $T_{on}$ must be filled with dummy packets. The concern is then to see the impact of this extra traffic on delay. Further we also consider the cases where users are willing to given up privacy in return of fast transmission and their performance are compared.

\section{On-Off Traffic Shaping Policy}
\label{sec:policy}
\noindent
%\sfnote{we haven't yet told why we use on-off and not another model. We should emphasise on the reason somewhere.}
Transmitting a packet in every slot regardless of the pattern of the arriving user traffic ensures that the transmitted packet sequence contains no information about the user traffic\footnote{Formally,  suppose output sequence $Y(t)=1$, $t=1,2,\cdots$ \ie a packet is transmitted in every slot.   The entropy of the transmission at a single slot is $H(Y(t)) =  p\log p + (1-p)\log(1-p)$ where $p=Prob(Y(t)=1)$.  Since $p=1$ when a packet is always transmitted, $H(Y(t))=0$.  The entropy of the \emph{sequence} $\{Y(t)\}$ satisfies $H(\mathbf{Y}) \le \sum_t H(Y(t)) = 0$ and since $H(\mathbf{Y}) \ge 0$ it follows that the entropy of the sequence $\{Y(t)\}$ is zero.   Similarly, $H(\mathbf{Y}|\mathbf{X})=0$ and so the mutual information $I(\mathbf{Y}; \mathbf{X})=0$. } and $I(\mathbf{Y}; \mathbf{X})=0$.  Similarly, when no packets are transmitted, $Y(t)=0$, $t=1,2,\cdots$, then also trivially $I(\mathbf{Y}; \mathbf{X})=0$.   
However, transmitting a packet in every slot means that when there is no information packet to send a dummy packet must be transmitted and so is clearly wasteful.  And transmitting no packets at all is clearly private but not useful for communication.   
This motivates use of an on-off approach to transmission.  That is, we specify an interval $\tau$.  In the first slot of this interval we transmit a packet (sending a dummy packet if no information packets are available\footnote{Note that it is also possible to adopt a partially private approach where, when no information packet is available to send, a dummy packet is transmitted with some probability less than one.  In this case the mutual information $I(\mathbf{Y}; \mathbf{X})$ will be non-zero, but can be controlled by adjusting the probability with which dummy packets are sent.   However, we leave this more general case for future work.}), and over the remaining $\tau-1$ slots no packets are transmitted (any arriving information packets are buffered in a queue).  That is, $Y(t)=1$ for $t=1$ and $Y(t)=0$ for $t\in\{2,\cdots,\tau\}$.  It is known from queueing theory that deterministic service minimises the queueing delay for a specified mean service rate \cite{kingman62}, i.e. periodic service minimises delay.  Conveniently, periodic service also means that the pattern of packet transmissions contains no information, i.e. the mutual information between the transmitted sequence $\mathbf{Y}$ and the arrival process $\mathbf{X}$ is zero, $I(\mathbf{Y}; \mathbf{X})=0$ and so transmissions are fully private.   Hence this traffic shaping approach is a minimum delay maximum privacy one.  By adjusting the mean transmit rate via parameters $g$ and $\tau$ we can tune the trade-off between the number of dummy packets sent (which waste network bandwidth) and the buffering delay experienced by information packets.

%Optimizing transmission rates of users by minimizing the sum of mutual information does not necessarily provide fairness. To illustrate this further, let us consider a general traffic shaper and let $Pr(X(t) = 1) = p$. The sequence $X$ to $Y$.	Now assume $I(X;Y) = u \neq 0$. By looking closely at the definition of mutual information (\ie $I(X; Y) = H(X)-H(X|Y)$), for an arbitrary sequence $Y$, $u$ can be obtained by both $p_u$ and $1 - p_u$ due to the symmetric nature of $H(X)$. Let $p_{low} = min\{p_u , 1 - p_u\}$ and $p_{high} = \max\{p_u, 1 - p_u\}$ . In a mutli-user scenario, minimizing the sum of the values of mutual information  enables a scheduler to force some users into sending at $p_{low}$, depriving them from obtaining their higher transmission rate.	To overcome this we need to use time-sharing to allow users obtain both $p_{low}$ and $p_{high}$.	This can be achieved by defining a cycle during which a user transmits for a period proportional to their arrival rates and then remains silent to allow other users to transmit.
%
%\dlnote{explain what a fixed cycle traffic light is, give some context.}
%

This on-off traffic shaper can be modelled as a so-called Fixed Cycle Traffic Light (FCTL), first studied in\cite{newell60} in the context of vehicular traffic.   Consider cars arriving at a junction controlled by a FCTL. The light has two states which divides the total cycle into fixed length green (g) and red (r) cycles. In the red cycle, a new arrival enters a queue, waiting to cross the junction. When the light turns green, cars in the queue cross the junction one at a time until the cycle lasts or queue becomes empty.   Cars arriving at the junction during a green cycle and finding the queue empty proceed to cross the junction.    It is this latter characteristic which makes the FCTL different from a conventional queue with periodic service.   In our setup the green cycle is of duration $g=1$ slot and the red cycle is of duraton $r=\tau-1$ slots.  
The characteristics of a FCTL have been much studied, see for example the overview in \cite{darroch64}, with estimates of the average queue length given in \cite{newell60} and \cite{broek06}.
\subsection{Delay}
\noindent
Suppose the input sequence $\{X(t)\}$ is i.i.d with $\Prob(X(t)=1) = p$ and $\Prob(X(t)=0) = 1-p$.  That is, the traffic arrivals form a Bernoulli process. Given that the time is slotted, following \cite{newell60}, the average waiting time $w$ for each packet measured in time slots is  
\begin{align}
w = (\tau-g) (1-p)^{-1}\tau^{-1}(E(q_x) / p + (\tau - g + 1) / 2). \label{eq:delay}
\end{align}
where $E(q_x)$ denotes the length of queue right at the end of green (serving) period of the $x$th cycle and $\tau$ is the length of each cycle \ie $\tau = r + g$.

In order to calculate $w$ we need to find the equilibrium distribution for $q_x$ and evaluate $E(q_x)$. While the calculation of $q_1$ and $q_2$ is straightforward, the evaluation of $q_3$ onwards quickly becomes tedious and the expressions cumbersome.  Several attempts have therefore been made to derive accurate estimates of $E(q_x)$ which are simpler in nature.	Two estimates, that of Newell \cite{newell60} and of Miller \cite{broek06}, are illustrated in Figure \ref{fig:estimations} and compared against numerical simulation data.  In the sequel we use Miller's estimate due to its accuracy and simplicity.  This is given by
\begin{align}
E(q_x) \approx \max\{\frac{2p\tau - g}{2(g - p\tau)}.(1-p),0\}
\end{align}
%when $2pt > g$ and $E(q_x) = 0$ when $2pt <= 0$.

\begin{figure}[!t]
\centering
	\includegraphics[width=0.8\columnwidth]{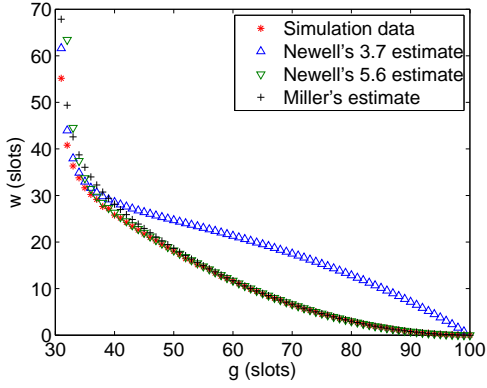}
	\caption{Comparison of Newell's (equations (3.7) and (5.6) in \cite{newell60}) and Miller's estimates for the waiting time at a FCTL with simulation data. The simulation values are averaged over 1000 cycles with $p=0.3$ and $\tau=100$. The data for $g/\tau < p$ is excluded as the queue is unstable in this region.}
	\label{fig:estimations}
\end{figure}

% \begin{figure}[!htb]
% \centering
% 	\includegraphics[width=0.95\columnwidth]{figs/rel_ton_dummy_c0_4.eps}
% 	\caption{The expected dummy traffic rate for different values of $g$ for $c = p + 0.1$ and $p = 0.2$ is shown. The simulation runs for 1000 cycles and the dummy rates are averaged over the values for each cycle.}
% 	\label{fig:rel_dummy_ton}
% \end{figure}

\subsection{Rate of Dummy Packet Transmissions}
\noindent
In addition to the delay introduced by traffic shaping we are also interested in the fraction $d$ of slots expended on dummy packet transmissions (when no information packet is available to send at a slot in an on cycle).   The information packet arrivals $\nu$ during an off period are distributed as:
\begin{align}
	Prob(\nu = k) = {T \choose k} p^k (1-p)^{T-k},
\end{align}
and the associated probability generating function is $K(z) = [(1-p)+pz]^T$.   Regarding $\Pi(z)$, the probability generating function for the limiting distribution $\pi_j$ the length of the queue at the start of the first slot of an on period, we have:
\begin{align}
	\Pi(z)  & = \frac{\sum_{j=0}^{g-1} \pi_{j}(z^{g}-z^j)}{z^{g}/K(z)-1} 
	 = \frac{\sum_{j=0}^{g} \pi_{j}(z^{g}-z^j)}{z^{g}[(1-p)+pz]^{-T}-1}.
\end{align}
We know $\Pi(1)=1$, which means after using L'hopital's rule
\begin{equation}
d = \frac{1}{\tau}\sum_{j=0}^{g-1} \pi_{j}(g-j) = \frac{g}{\tau} - p, \label{eq:dval}
\end{equation}
the average number of dummy packets transmitted over cycle.
	
We will assume that $g - p\tau>0$. This means that on average our service can accommodate all of the arrivals to the queue and the queue is stable. 	Stability can be seen by looking at the polynomial $z^{T_{on}}[(1-p)+pz]^{-T}-1$ which, for stability, should have no zeros in or on the unit circle.  Now,
\begin{align}
	z^{g} & = [(1-p)+pz]^{\tau},\ 
	g z^{g}  = p \tau [(1-p)+pz]^{\tau-1} 
\end{align}
Dividing the two equations, for $|z|<1$ we should have $p\tau > g$, but this is excluded when $g - p\tau>0$.

\subsection{Example}
\noindent
{In order to investigate the effectiveness of on-off traffic shaping on privacy, we conducted the following test: we collected packet timestamp traces from 10 different web sites.  The traces are of approximately same length.  We compared all of the samples with each other using Dynamic Time Warping and calculated their $F$-distance \cite{feghhi15}. The $F$-distance, which is a value in $[0,1]$, is a measure of the similarity between two timestamp traces and is used in \cite{feghhi15} as the basis for a successful timing-only traffic analysis attack based in $k$-NN clustering.  The $F$-distance is smaller when traces are similar and increases as they become more different.  It can be seen in Figure \ref{fig:rates} that the average distance between the traces for different web sites is relatively large, which would enable an attacker to distinguish between them.}

\begin{figure}
\centering
	\includegraphics[trim=0cm 1cm 0cm 1cm,clip,width=0.95\columnwidth]{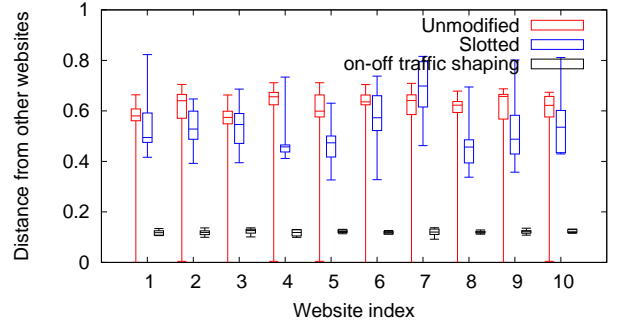}
	\caption{Comparison of $F$-distance between different web traces for unmodified, slotted and queued+slotted scenarios.	The slot size is $0.01$s, $g=5$, $\tau=10$ and the window size for evaluating $F$-distance is 0.2.}
	\label{fig:rates}
\end{figure}

{We then manually divided time into slots (of 10ms duration) and adjusted the packet timings so that they are only sent at the beginning of a slot\footnote{Note that using an actual time slotted tunnel adds an additional distortion due to network protocols. So by manually time slotting the time we are considering a slightly easier scenario for the attacker.}.  It can be seen in Figure \ref{fig:rates} that time slotting decreases the average  distance between packet traces, the decrease is relatively small and would still allow an attacker to distinguish between different the traces for different web sites.}

{Finally we applied on-off traffic shaping and calculated the $F$-distance.  It can be seen that there is now a considerable drop in the average distance, and also in the variance.  This indicates that the modified traces are now much more similar, making it difficult for the attacker to distinguish between web sites.}

\begin{figure}
\centering
\subfloat[Duration]{
  \includegraphics[width=0.48\columnwidth]{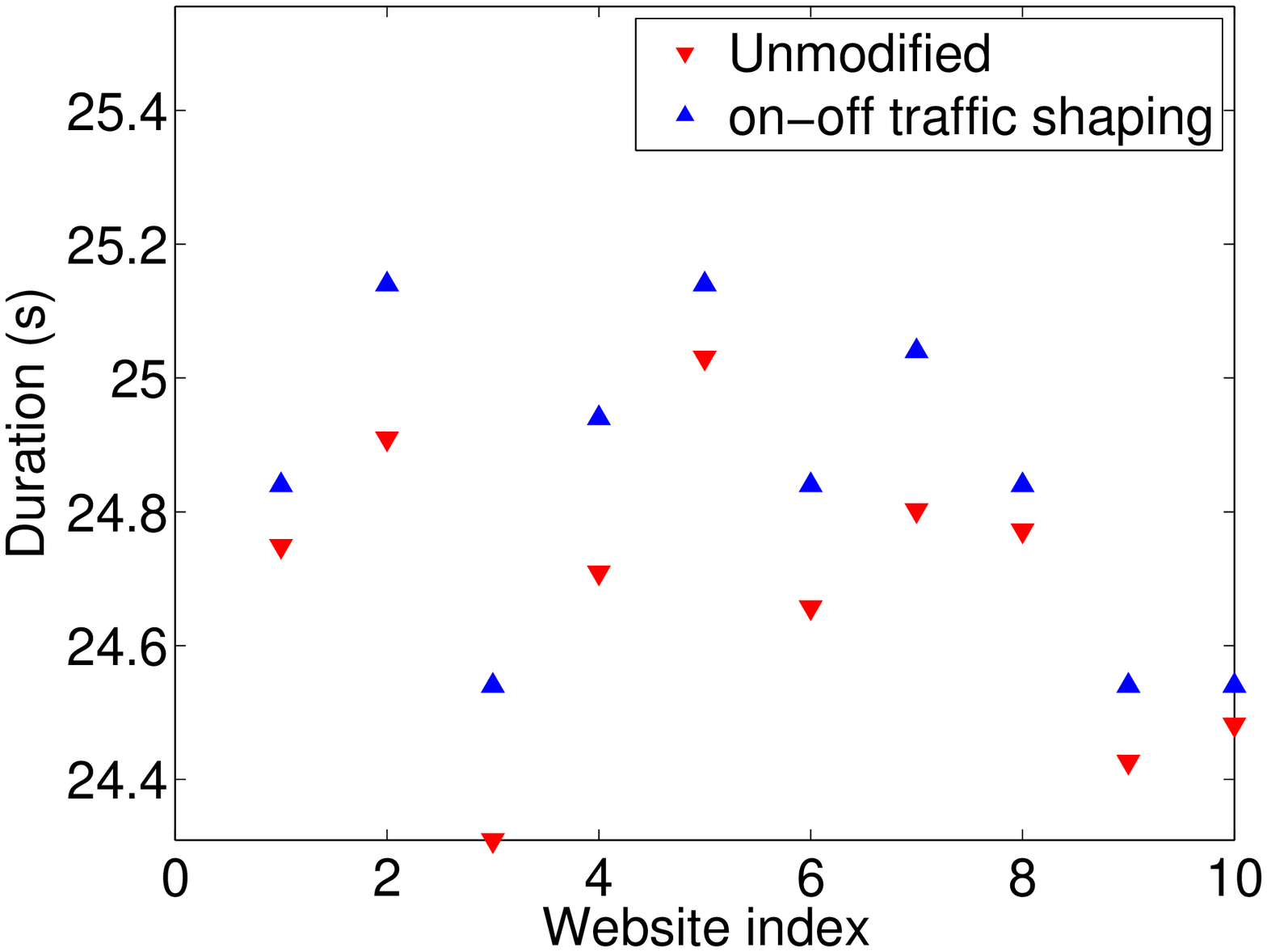}
	\label{fig:delay2}
}
\subfloat[Packet Count]{
  \includegraphics[width=0.48\columnwidth]{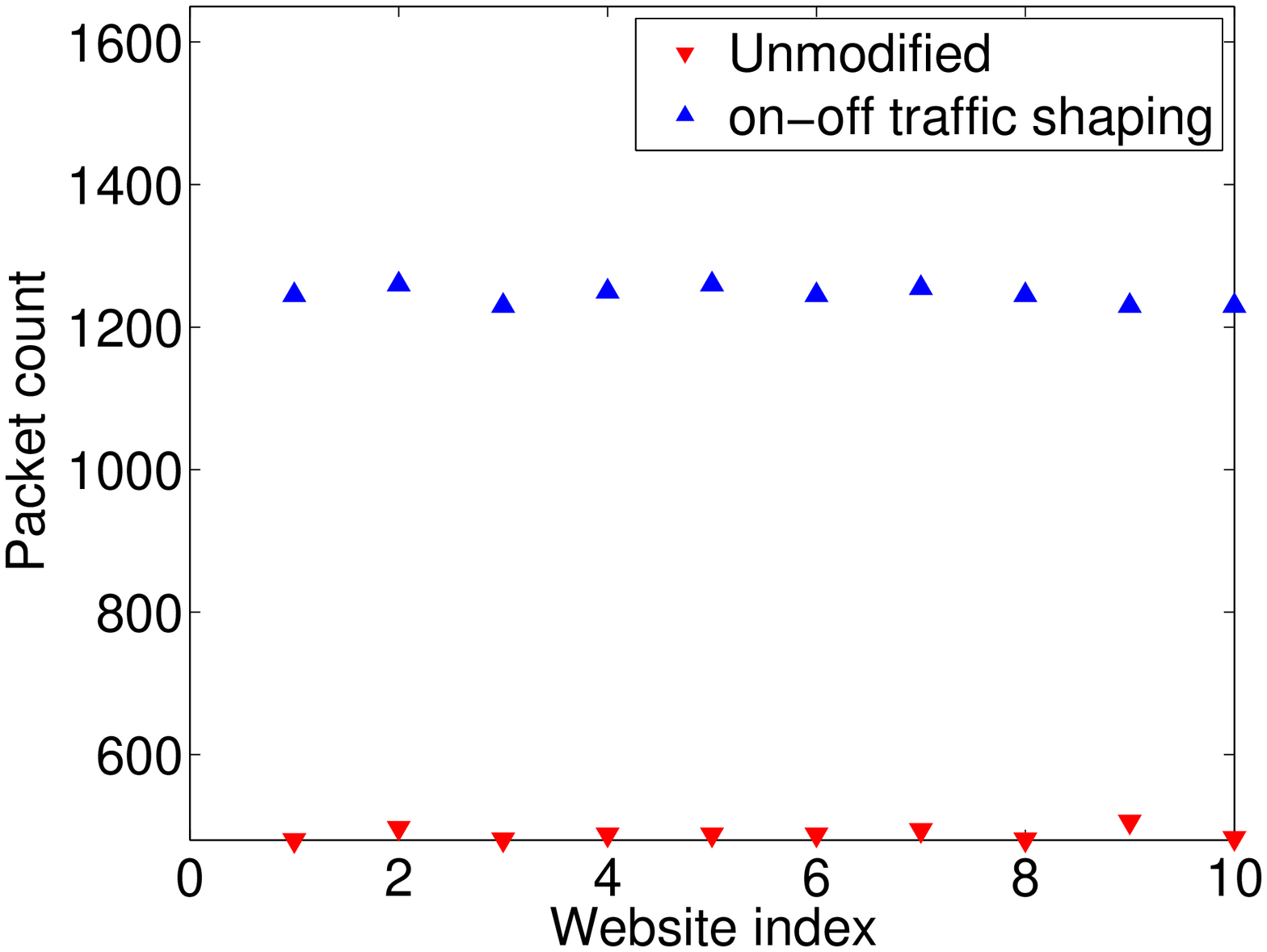}
	\label{fig:dummy2}
} \caption{{Comparison between transmission duration and packet count for unmodified and on-off shaped web traces. $g/\tau=0.5$.}}
\label{fig:penalty_2}
\end{figure}

\begin{figure}
\centering
\subfloat[Duration]{
  \includegraphics[width=0.48\columnwidth]{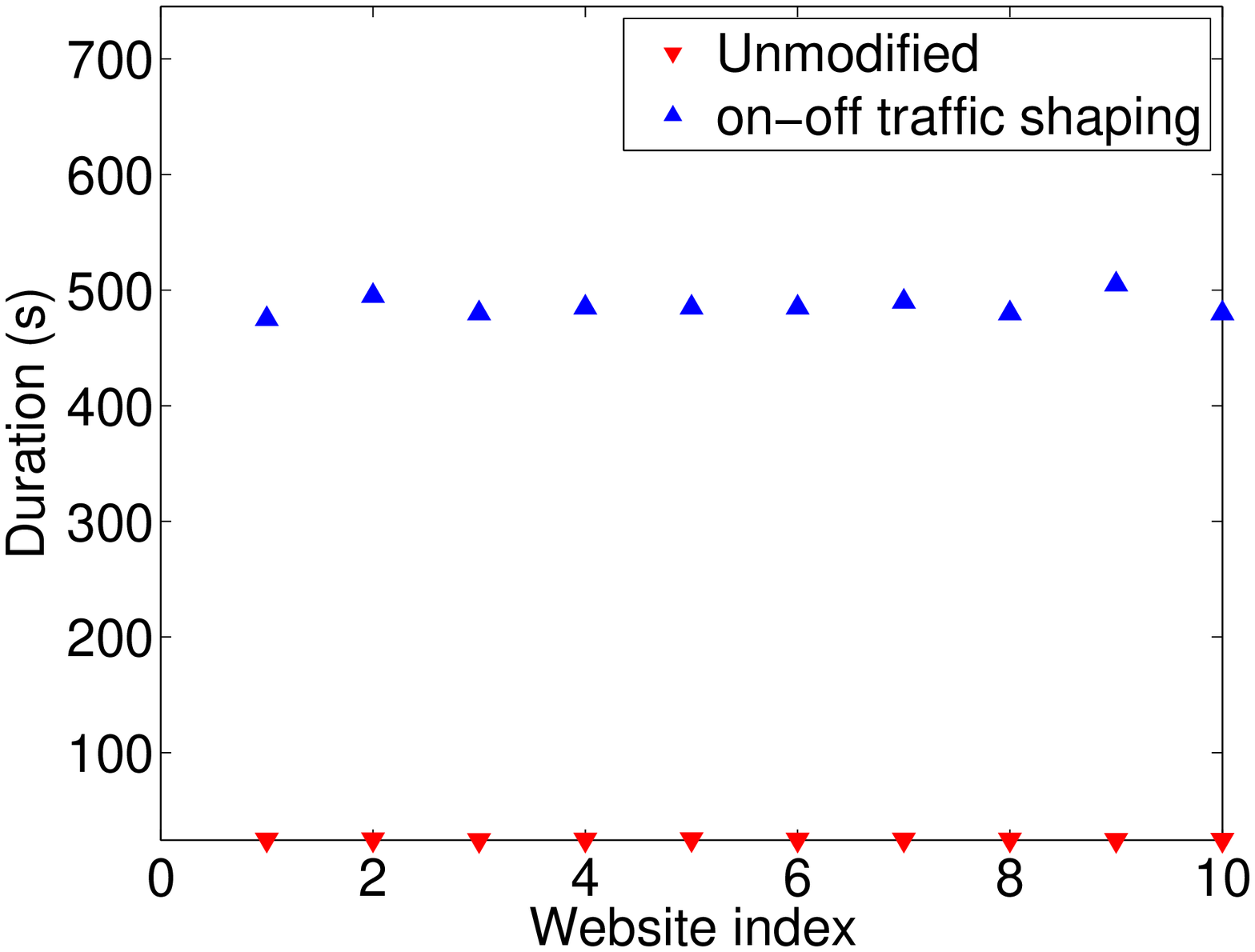}
	\label{fig:delay100}
}
\subfloat[Packet Count]{
  \includegraphics[width=0.48\columnwidth]{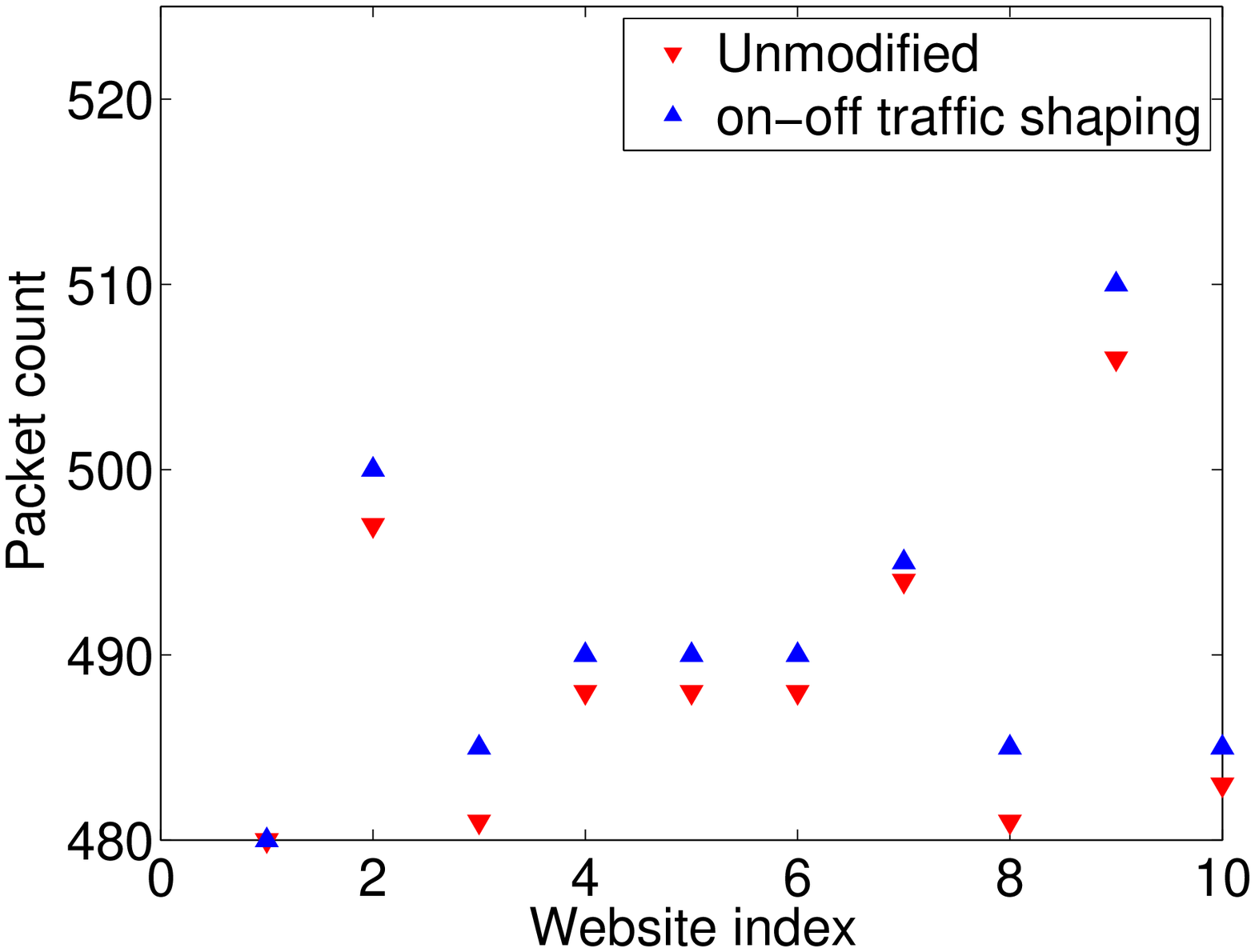}
	\label{fig:dummy100}
} \caption{{Comparison between transmission duration and packet count for unmodified and on-off shaped web traces. $g/\tau=0.01$.}}
\label{fig:penalty_100}
\end{figure}
{As already noted, traffic shaping introduces a queueing delay and the transmission of additional dummy packets. Figure \ref{fig:penalty_2} shows the measured delay and number of dummy packets when $g/\tau=0.5$.  It can be seen that the delay is negligible since $\tau$ is small but the number of dummy packets is almost 2 times that of the real traffic.  As $g/\tau$ is decreases, the number of dummy packets transmitted falls, but the delay increases.  This is illustrated in Figure \ref{fig:penalty_100} for $g/\tau = 0.01$. %So although queuing policy improves the privacy, there is a clear trade off between delay and dummy rate which decides the size of the network.
}

% \dlnote{this is a natural break point in the paper - we've introduced our setup and the red green queue solution for a single user, so its time to help the reader understand its implications with some discussion of use of red-green queue with a single user. what sort of overhead and delay is introduced. does it really protect against traffic analysis attacks. give an example etc. compare with alternative methods in the literature, if any. how easy/hard would it be to implement}

%The red-green queue proposed above can be considered as a defence against traffic analysis attacks.	In order to mount an effective attack on this concept, first the information about message sizes are removed by fixing the length of packets. This makes the defence impervious to majority of attacks which are based on message lengths \eg packet size, packet count \etc Then following the queuing scheme presented earlier, we can construct a defence which removes the timing patterns of a flow by reshaping it into a monotonic flow which has the same transmission rate but now contains no information about the original flow.	
 
% \sfnote{I'm preparing an example where a sample is compared with examplar before and after this reshape and I want to show it gets bigger distance after reshaping. I also compare samples from two different websites of same length to say that reshaping makes them look similar} 

%This defence however, regardless of being strong, has an impact on waiting time which can be critical for some services \eg voice calls, streaming, \etc

\section{Proportional Fair Privacy}
\label{sec:multiuser}
\noindent
We consider a shared network where each user applies traffic shaping before transmitting their traffic over the network, e.g. see Figure \ref{fig:intro}.   Note that traffic shaping is applied individually to each user's traffic, which is required to prevent active traffic injection attacks (where an adversary sharing a queue with a user injects traffic in a way that reveals the presence/absence of a user packet in the queue \cite{kadloor13}).   In practice this setup might correspond to a VPN where the client shapes and encrypts arriving traffic before transmitting it across the Internet to a VPN gateway.   As already noted, buffering and shaping may introduce delay (via buffering) and consume bandwidth (via dummy packet transmissions) and there is a joint trade-off between privacy, delay and throughput for the users sharing the network.

\subsection{Network Model}
\noindent
We consider a multi-user network serving a set of $\N$ users and a set $\F$ of flows.   Each flow $f\in\F$ has a source $s_f \in \N$, a mean offered load rate $\mu_f$ and a mean delay deadline $\sigma_f$.    The network uses a scheduler that can service the offered load provided the aggregate flow usage satisfies
\begin{align}
\sum_{f\in\F} \frac{\mu_f}{\psi_f} \le 1 
\end{align}
where $\psi_f$ is the physical transmit rate for flow $f$ and so $\mu_f/\psi_f$ is the mean fraction of airtime used by flow $f$ and for arrival rate $p_f$ and dummy rate $d_f$ we have $\mu_f = p_f + d_f$.

%Using certain assumption we can create a feasible scheduler when this condition is satisfied. Let $T_f$ be the duration of the slots of the arrival process of flow $f$ and $L_f$ the packet size in bits, so the rate is $R_f=L_f/T_f$.   Suppose we can select $T_c$ such that $T_f = k_f T_c$ for k some integer.    Let $v_{f,c}$ be the fraction of time allocated to flow f in cell c, and $w_{f,c}$ the transmit rate.  Then $k_f v_{f,c}w_{f,c}T_c$ is the number of bits transmitted across cell c for flow f over the time of a single arrival slot.   To avoid queueing at cells we need
%\begin{equation}
%L_f < k_f v_{f,c}w_{f,c}T_c \quad \ie \quad R_f < v_{f,c}w_{f,c} \nonumber
%\end{equation}
%For the schedule to be feasible we also require
%$\sum_f v_{f,c} \le 1$
%Letting $k_f = \lfloor T_f/T_c \rfloor$ then we can relax the requirement that $T_f = k_f T_c$.	So we can see that such scheduler exists. However as we explain later, we are focused on characteristics of individual flows and consideration of above conditions is ignored.
%
%
%\dlnote{show that a feasible schedule exists when this condition is satisfied.} \sfnote{I copied the stuff we discussed over email. you think the notation should be changed?}

\subsection{Convexity of Traffic Shaping Delay}
%\dlnote{bundle following convexity stuff into a lemma}

%Now that the optimal duration of green cycle is evaluated, we can rewrite $w$ in terms of $p$ and $c$. By plugging in $g = ct = 1$, and knowing that the queue length is positive, we have
As already noted, on-off traffic shaping introduces additional delay.   When on-time $g=1$ we have
\begin{align}
%E(q_x) \approx \begin{cases} \frac{(2p - c)(1-p)}{2(c- p)}, & \mbox{if } c \leq 2p \\ 0, & \mbox{if } c > 2p \end{cases}
E(q_x) \approx \max\{ \frac{(2p - c)(1-p)}{2(c- p)}, 0\}
\end{align}
and mean waiting time
\begin{align}
w & = \frac{1-c}{1-p} \bigg[ \frac{E(q_x)}{p} + \frac{1}{2c} \bigg] \label{eq:delay2}
\end{align}
where $c=g/t > p$.

\begin{lemma}[Convexity of Delay]
\label{lm:convexity}
The mean waiting time $w$ in (\ref{eq:delay2}) is convex in the arrival rate $p$ and dummy rate $d= c-p$, although not jointly convex, for $p\in[0,1]$, $c\in(p,1]$.
\end{lemma}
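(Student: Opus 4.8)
The plan is to first remove the $\max$ in $E(q_x)$ by splitting the feasible region $\{\,p\ge 0,\ d>0,\ p+d\le 1\,\}$ along the line $c=2p$ (equivalently $d=p$): in the \emph{underloaded} regime $d\ge p$ one has $E(q_x)=0$ and $w=\frac{1-c}{2c(1-p)}$ with $c=p+d$, while in the \emph{loaded} regime $d<p$ one has $E(q_x)=\frac{(p-d)(1-p)}{2d}$. On each open regime $w$ is a rational function with strictly positive denominator, hence smooth, so convexity in $p$ (respectively in $d$) reduces to verifying $\partial_p^2 w\ge 0$ (respectively $\partial_d^2 w\ge 0$); after clearing the positive denominators each of these becomes the nonnegativity of an explicit polynomial on the constrained region. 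The kink at $d=p$ is treated separately by a one-sided-derivative argument, and non-joint convexity is obtained by exhibiting a single point at which the $2\times 2$ Hessian is indefinite.

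I would dispatch convexity in $d$ (for fixed $p$) first, since it is the clean case. In the underloaded regime $w=A(c)B(p)$ with $A(c)=\frac{1-c}{2c}$ and $B(p)=\frac{1}{1-p}>0$; as $A$ is convex in $c$ and $c=p+d$ is affine in $d$, convexity in $d$ is immediate ($\partial_d^2 w=A''(c)B(p)=\frac{1}{c^3(1-p)}>0$). In the loaded regime the factor $1-p$ in $E(q_x)$ cancels and $w$ collapses to $\frac{1-p}{2d}-\frac{1-d}{2p}+\frac{1-c}{2c(1-p)}$, each summand of which is convex in $d$ (the $1/d$ term supplies the positive curvature, the middle term is affine in $d$, and the last is convex in $d$). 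It then remains to cross the kink $d=p$: $w$ is continuous there, and I would check that the one-sided $d$-derivative jumps \emph{upward}, which together with piecewise convexity yields convexity through the kink. The size of the jump is $-\frac{2p-1}{2p^2}$, and since the kink lies in the domain only when $c=2p\le 1$, i.e. $p\le\frac12$, this jump is automatically nonnegative.

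For the failure of joint convexity the underloaded regime again gives the cleanest witness. Writing $w=A(c)B(p)$ and using $\partial_p c=\partial_d c=1$, the Hessian determinant collapses to $\det\nabla^2 w=A A''BB''-(A')^2(B')^2=\frac{3-4c}{4c^4(1-p)^4}$, which is strictly negative whenever $c>\frac34$ (attainable in the domain, e.g. $c=0.9$, $p=0.3$, $d=0.6$). A negative determinant makes the Hessian indefinite there, so $w$ is not jointly convex.

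The step I expect to be the main obstacle is convexity in $p$ for fixed $d$. Here the contribution of $E(q_x)/p$ reduces to $-\frac{1-d}{2p}$, which is \emph{concave} in $p$ and competes directly with the convex $\frac{1}{1-p}$-type contribution; after clearing denominators, convexity in $p$ becomes the nonnegativity of an explicit polynomial on the constrained region, and it is the sign of this polynomial — for which the concave delay term leaves very little room — that is the crux of the whole lemma. I expect essentially all of the work to concentrate here, together with the companion kink check at $p=d$ (whose sign again turns on $c\le 1\Rightarrow d\le\frac12$). My approach would mirror the determinant computation above: clear the positive denominator, regard the result as a quadratic in $1-p$, and control its discriminant and its values at the regime boundary $d=p$ using $p<c\le 1$ to the fullest. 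It is precisely here that the hypotheses must be used in full force, and where the delicacy of the statement resides.
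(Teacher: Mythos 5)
The parts you completed are correct: the reduction of the loaded-regime waiting time to $\frac{1-p}{2d}-\frac{1-d}{2p}+\frac{1-c}{2c(1-p)}$, the convexity in $d$ at fixed $p$ including the kink check at $d=p$ (jump $\frac{1-2p}{2p^2}\ge 0$ since $c=2p\le 1$ forces $p\le\frac12$), and the determinant $\frac{3-4c}{4c^4(1-p)^4}$ witnessing failure of joint convexity all check out, and the kink analysis is something the paper's proof omits entirely. But the step you deferred as the crux --- convexity in $p$ at fixed $d$ --- is a genuine gap, and it cannot be closed, because that claim is \emph{false}. Take $d=0.4$ fixed and $p=0.5$ (so $c=0.9$, inside the domain, loaded regime). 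Differentiating your own expression twice in $p$ (with $c=p+d$, $d$ fixed) gives $\partial_p^2 w = -\frac{1-d}{p^3}+\frac{1}{(1+d)(p+d)^3}-\frac{d}{(1+d)(1-p)^3}\approx -4.80+0.98-2.29\approx -6.1<0$; numerically $w(0.45)\approx 0.1813$, $w(0.5)\approx 0.1361$, $w(0.55)\approx 0.0755$, a negative second difference. (The underloaded regime fails too near $c=1$, e.g.\ at $p=0.1$, $d=0.88$.) So the polynomial you hoped to show nonnegative really is negative on part of the region; your instinct that the concave term ``leaves very little room'' was right, except that it in fact wins.

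It is instructive to see why the paper's own proof never confronts this. The paper differentiates in the $(p,c)$ variables, shows $w_{pp}>0$ with $c$ held fixed and $w_{cc}>0$ with $p$ held fixed, and then transfers to $(p,d)$ with the one-line claim that convexity is preserved under linear transformations. That transfer is valid for \emph{joint} convexity but not for coordinate-wise convexity: convexity in $p$ at fixed $d$ is convexity along the direction $(1,1)$ in the $(p,c)$ plane, i.e.\ it requires $w_{pp}+2w_{pc}+w_{cc}\ge 0$, and at the very point $(p,c)=(0.5,0.9)$ that the paper uses to exhibit an indefinite Hessian this sum equals $1.65-18.31+10.56\approx -6.1<0$ --- the same counterexample as above. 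So what the paper actually proves is separate convexity in $p$ (fixed $c$) and in $c$ (fixed $p$); the first of these does not imply, and here does not yield, convexity in $p$ at fixed $d$, which is what the lemma asserts and what the alternating minimization over $(\mathbf{p},\mathbf{d})$ later in the paper relies on. Your blind attempt, by working honestly in the $(p,d)$ variables, got stuck exactly where both the lemma and the paper's proof break; the correct conclusion of your approach is a disproof of the fixed-$d$ half of the statement (and hence a flag that the paper's final ``linear transformation'' step is erroneous), not a proof.
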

\begin{proof}
When $c > 2p$ (so $E(q_x) = 0$), the second derivatives of $w$ with respect to $p$ and $c$ are
\begin{align*}
%w_p &= \frac{1-c}{2c(1-p)^2},\ 
%w_c = -\frac{1}{2c^2(1-p)} \\
w_{pp} &= \frac{1-c}{c(1-p)^3},\ 
w_{cc} = \frac{1}{c^3(1-p)}\, 
w_{pc} = -\frac{1}{2c^2(1-p)2}
\end{align*}
It can be seen that $w_{pp}>0$ and $w_{cc}>0$ since $p,c\in[0,1]$.
%
%\begin{align*}
%H(w) & = \begin{bmatrix}
%w_{pp} & w_{pc} \\
%w_{cp} & w_{cc} \\ \end{bmatrix} \\
%det(H(w)) & = w_{pp} w_{cc} - w_{pc}^2
%\end{align*}
%so for $c > 2p$ we have
%\begin{align*}
%det(H(w)) = \frac{1-c}{c^4(1-p)^4}-\frac{1}{4c^4(1-p)^4}
%\end{align*}
%which is positive if $c < 3/4$. This means that the delay is not jointly convex on the area of interest. However as we discuss later, we can still use the partial convexity to find a stable sub-optimal solution.\\
%
When $c < 2p$ (so $E(q_x)>0$), we have
\begin{align}
%w_p & = \frac{1-c}{2} \bigg[ \frac{1}{(c-p)^2} + \frac{1}{p^2} + \frac{1}{c(1-p)^2} \bigg] \\
%w_c & = \frac{1}{2} \bigg[ \frac{c}{p(c-p)} - \frac{2}{c-p} - \frac{1-c}{(c-p)^2} - \frac{1}{c^2(1-p)} \bigg] \\
w_{pp} & = (1-c) \bigg[ \frac{1}{(c-p)^3} - \frac{1}{p^3} + \frac{1}{c(1-p)^3} \bigg] \\
w_{cc} & = \frac{1-p}{(c-p)^3} + \frac{1}{c^3(1-p)} \\
w_{pc} & = -\frac{1}{2} \bigg[ \frac{2 - p - c}{(c-p)^3} + \frac{1}{p^2} + \frac{1}{c^2(1-p)^2} \bigg]
\end{align}
Since $c > p$ then $w_{cc}>0$.  Since $c < 2p$, $w_{pp}>0$.  However the Hessian need not be positive semidefinite (for example when $p=0.5$ and $c = 0.9$).
Since $w_{pp}>0$ and $w_{cc}>0$, the delay $w$ is convex in $p$ and $c$ individually, but since the Hessian is not positive semidefinite $w$ is not jointly convex in these variables.
%
%So the delay is partially convex on $p$ and $c$.	
Now $d= c-p$ is a linear function of $c$ and $p$.  Convexity is preserved under linear transformations and so the stated result follows.
\end{proof}

%\dlnote{need to deal with the fact that r and g are integer valued, so can never be convex unless we can relax these parameters to be real valued. explain how we can use queue continuity plus a feedback-based scheduler to allow this}

%This means it is possible to find the optimal values for $g$ and $p$ based on a convex optimisation problem. However, the optimum value of $g$ would be possibly non-integer. Based on continuity of singer server queues, it can be shown that there exists a sequence of integer values of $g'_i$ that is close to the optimum value of the optimisation problem and results in the same value of waiting time in average. 

Note that since the on time $g$ and cycle time $\tau$ are expressed as numbers of slots, they are integer valued $g,\tau\in\mathbb{N}$ and so the domain of ratio $c$ is the rational numbers $\mathbb{Q}$ rather than the real-valued numbers.   We therefore consider the relaxed problem where $c$ takes values in $(p,1]\subset\mathbb{R}$ in order to ensure convexity.    This relaxation does not, however, entail any loss of generality.   Rather than using fixed on time $g\in\mathbb{N}$, define a sequence $g_k\in\mathcal{N}$, $k=1,2,\cdots$.   For a given value of $c\in\mathbb{R}$, by queue continuity an integer-valued sequence $g_k$ exists such $|\sum_k g_k-\hat{g}|$ is bounded and so the waiting time can be made arbitrarily close to that with specified $\hat{g}=c\tau\in\mathbb{R}$.

\subsection{Proportional Fair Allocation}
\label{sec:optimization}
\noindent
We consider the following optimisation $P$ which maximises log-rate (and so is proportional fair) subject to network and traffic shaping delay constraints:
\begin{align}
\min_{\mathbf{p},\mathbf{d}} & \quad U(\mathbf{p}):=\sum_{f \in \F} -\log(p_f) \label{eq:optimization}\\
\text{s.t.} & \quad \mathbf{w} \preceq \boldsymbol{\sigma} \label{eq:delaycons}\\
& \quad \sum_{f \in \F} \frac{p_f + d_f}{\psi_f} \leq 1 \label{eq:netcons} \\
& \quad \mathbf{p},\mathbf{d}  \in [0,1]^{|\F|} \label{eq:dummycons}
%\label{eq:wf}
\end{align}
with
\small
% w_f = \frac{1-(p_f+d_f)}{2}\bigg[\max\{\frac{1 - 2d_f}{d_fp_f(1 - d_f)},0\} + \frac{1}{(1-p_f)p_f} \bigg]
\begin{align}
w_f = \frac{1-(p_f+d_f)}{2(1-p_f)}\bigg[\max\{\frac{(p_f-d_f)(1-p_f)}{p_fd_f},0\} + \frac{1}{(p_f + d_f)} \bigg]
\end{align}
\normalsize
for all flows $f \in \F $ and where $\mathbf{w}=[w_f]$, $\boldsymbol{\sigma}=[\sigma_f]$, $\mathbf{p}=[p_f]$, $\mathbf{d}=[d_f]$ for $f \in \F$ are vectors in $\mathbb{R}^{|\F|}$.  Constraints (\ref{eq:delaycons}) impose the requirement that flow delay deadlines are met, while (\ref{eq:netcons}) ensures that the flow rates (including both information and dummy transmissions) can be scheduled by the network.

\subsection{Solving Non-Convex Optimisation $P$}
\noindent
Optimisation $P$ is not convex because, by Lemma \ref{lm:convexity}, the delay constraints are not jointly convex in $\mathbf{p}$ and $\mathbf{d}$.  Nevertheless, these constraints are convex in $\mathbf{p}$ and $\mathbf{d}$ individually.   This suggests the use of an alternating approach to solve $P$.  Namely, solve for $\mathbf{p}$ holding $\mathbf{d}$ constant, then solve for $\mathbf{d}$ holding $\mathbf{p}$ constant.   Let $\mathbf{p}^*_k$, $\mathbf{d}^*_k$, $k=1,2,\cdots$ denote the sequence of alternating solutions found in this way.   Each solution is feasible for problem $P$.  Further, since each individual optimisation is convex, we can find a global minimum and so $U(\mathbf{p}^*_{k+1}) \le U(\mathbf{p}^*_k)$.   Hence, the sequence $\mathbf{p}^*_k$, $\mathbf{d}^*_k$ is guaranteed to converge to a feasible stationary point of problem $P$.   While this stationary point is, in general, sub-optimal, in practice we have found that it is usually close to a global optimum.

To carry out each optimisation we use a subgradient method for simplicity and because of its suitability for distributed implementation.   Of course other methods might also be used.  The resulting procedure is summarised as follows:
\begin{algorithm}
% \vspace*{2cm}
% \topmargin{1pt}
\caption{Alternating Solution Method}
\label{alg:subgradient}
\begin{algorithmic}[1]
\Statex {\textbf{iterate $s$:}}
\Statex \quad {\textbf{iterate $t$:}}
	\Statex \hspace{\algorithmicindent} $\mathbf{p}(t+1) = \mathbf{p}(t) - \alpha \partial_{\mathbf{p}} L_{\mathbf{p}}(\mathbf{p}(t),\mathbf{d}(s-1),\boldsymbol{\lambda}_p(t))$
	\Statex \hspace{\algorithmicindent} $\boldsymbol{\lambda}_p(t+1) = [ \boldsymbol{\lambda}_p(t) + \alpha \partial_{\boldsymbol{\lambda}}L_{\mathbf{p}}(\mathbf{p}(t),\mathbf{d}(s-1),\boldsymbol{\lambda}_p(t))]^+$
\Statex \quad {\textbf{loop}}	
\Statex \quad {\textbf{iterate $t$:}}
	\Statex \hspace{\algorithmicindent} $\mathbf{d}(t+1) = \mathbf{d}(t) - \alpha \partial_{\mathbf{d}}L_{\mathbf{d}}(\mathbf{p}(s),\mathbf{d}(t),\boldsymbol{\lambda}_d(t))$
	\Statex \hspace{\algorithmicindent} $\boldsymbol{\lambda}_d(t+1) = [\boldsymbol{\lambda}_d(t) + \alpha \partial_{\boldsymbol{\lambda}} L_{\mathbf{d}}(\mathbf{p}(s),\mathbf{d}(t),\boldsymbol{\lambda}_d(t))]^+$
\Statex \quad {\textbf{loop}}	
\Statex {\textbf{loop}}
\end{algorithmic}
\end{algorithm}

with Lagrangians,
\begin{align*}
L_{\mathbf{p}}(\mathbf{p},\mathbf{d}^*,\boldsymbol{\lambda}) = & \sum_{f\in\F} -\log(p_f) + \sum_{f \in \F} \lambda_{1,f}(w_f - \sigma_f) + \\
& \lambda_2((\sum_{f\in\F}p_f + d^*_f) -1) + \\
& \sum_{f\in\F} \lambda_{3,f} (p_f - 1) - \sum_{f\in\F} \lambda_{4,f} p_f
\end{align*}
\begin{align*}
L_{\mathbf{d}}(\mathbf{p}^*,\mathbf{d},\boldsymbol{\lambda}) = & \sum_{f \in \F} \lambda_{1,f}(w_f - \sigma_f) + \\
& \lambda_2((\sum_{f\in\F}p^*_f + d_f) -1) - \sum_{f\in\F} \lambda_{5,f} d_f
\end{align*}

%The algorithm works as follows: Let $obj(t)$ be the value of objective function \ref{eq:optimization} in iteration $t$. The method starts from an arbitrary initial point. For each iteration, first we let $d$ be constant and use it to find optimum value $p^*$ corresponding to selected $d$, using standard subgradient method. The updates are also descent \ie in each iteration, for each $t$ and $t'$ where $t' > t$ we have $obj(t') \leq obj(t)$.	The $p^*$ is then kept constant and an optimum value $d^*$ corresponding to $p^*$ is calculated using a similar iteration for $d$. This procedure is run for a few iterations until a stable point is achieved.

%\subsection{Proof of convergence}
%\label{sec:convergence}
%\noindent
%Following the partial convexity of optimization problem \ref{eq:optimization} in $p$ and $d$, and given the descent nature of our subgradient method, it can be guaranteed that the algorithm reaches a local minimum. However, missing the jointly convexity, such local minimum is not necessary a global optimum value. We use simulated annealing by using different initial points to avoid getting stuck in a local minimum.
%
\subsection{Examples}
\label{sec:results}
\noindent
% \textbf{**this section needs much more work}
{We present a number of examples to illustrate the proportional fair allocation in a fully private shared network.}

% \subsubsection{}
\begin{figure}[!t]
\centering
\subfloat[Optimal throughputs]{
  \includegraphics[width=0.48\columnwidth]{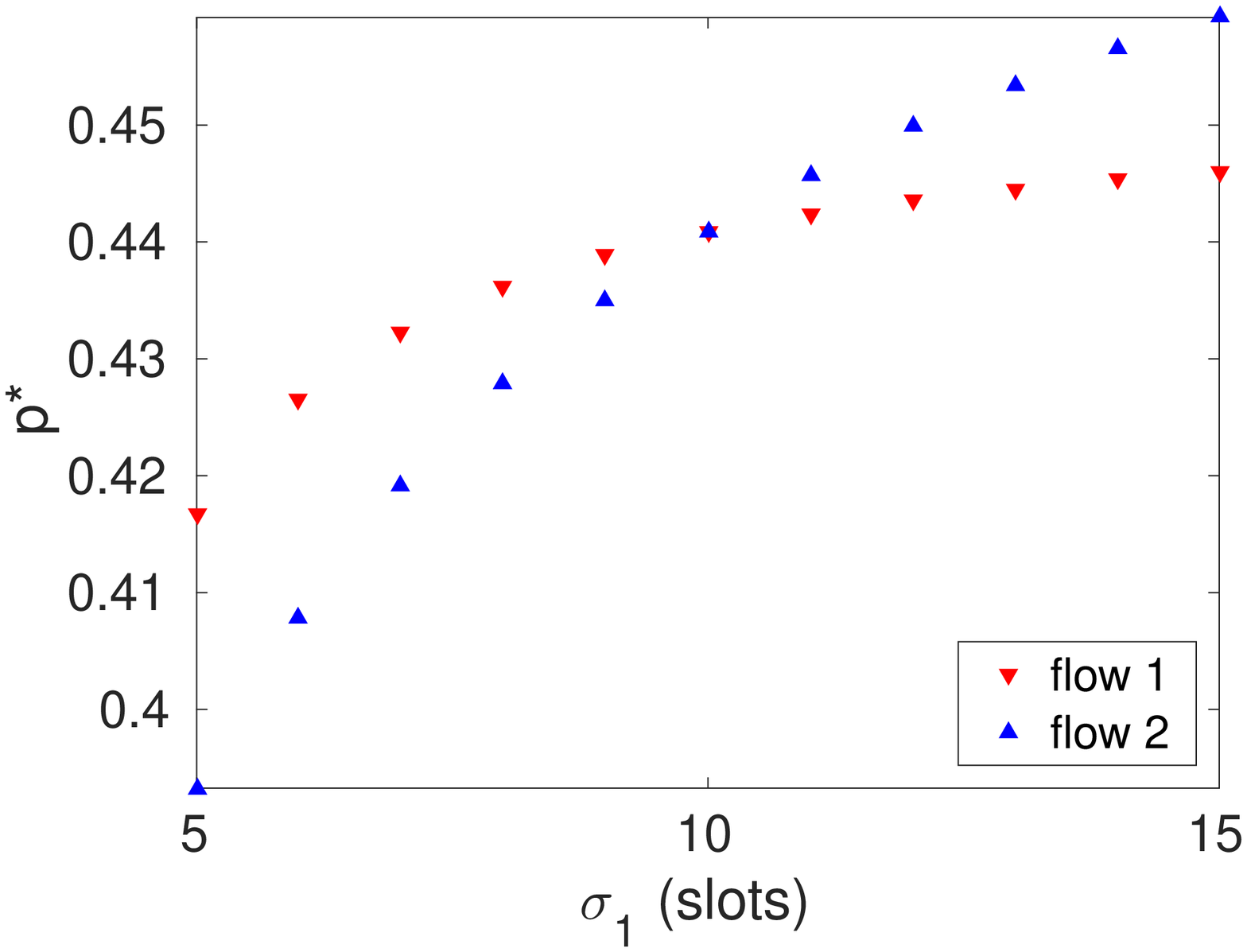}
	\label{fig:ex1optp}
}
\subfloat[Optimal Dummy rates]{
  \includegraphics[width=0.48\columnwidth]{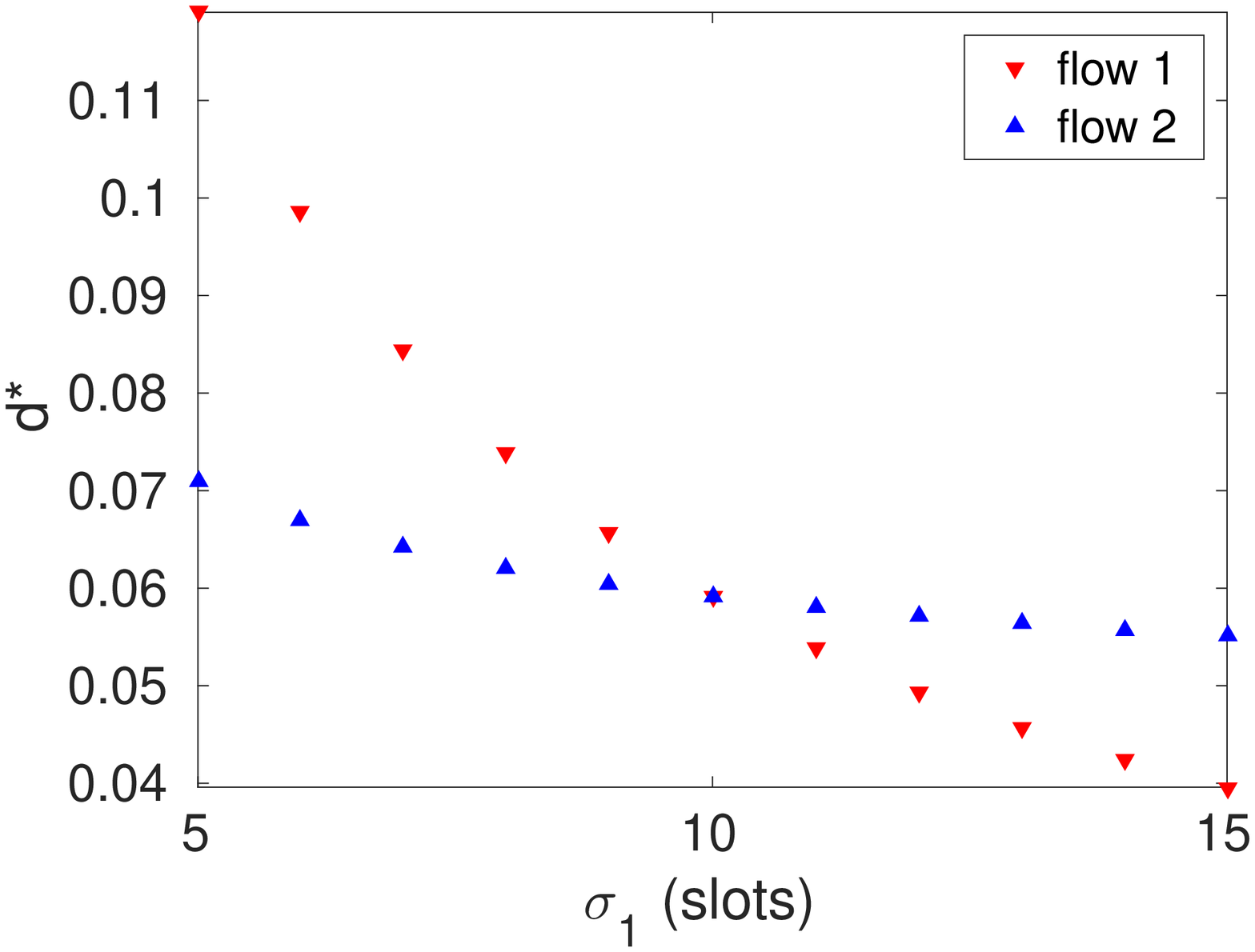}
	\label{fig:ex1optd}
} \caption{{Illustrating optimal throughput and dummy rate for delay deadlines $\sigma_2 = 10$ and $\sigma_1 = \{5, \dots, 15\}$.}}
\label{fig:example1}
\end{figure}

{\textit{Example 1: Fully private network.} Consider a network with two users having different delay deadlines.	We let delay deadline $\sigma_2 = 10$ and vary $\sigma_1$ between $\{\sigma_2 - 5, \sigma_2 + 5\}$ to observe the impact of the delay deadline on users' network share in a fully private network. Figure \ref{fig:example1} shows the proportional fair $p^*$ and $d^*$ vs $\sigma_1$. It can be seen that users' throughput and dummy rate are proportional to their delay deadlines \ie users with lower deadline are allowed to transmit more real and dummy traffic resulting a larger network share.}

\textit{Example 2: Mix of private and non-private flows.} As already noted, users in a shared network need to sacrifice throughput and/or delay to achieve full privacy. However, staying within our traffic shaping framework, a user can choose to ignore privacy by sending no dummy packets and using  the full cycle length for transmitting information packets.  Of~course this allows an adversary to see their packet arrivals. The optimization problem for this scenario is similar to \ref{eq:optimization} except that the delay and dummy rate constraints are now
\begin{align*}
w_f \leq \sigma_f, & \quad f \in \F_{private} \\
d_f > 0, & \quad f \in \F_{private} \\
d_f = 0, & \quad f \in \F - \F_{private}
\end{align*}
where $\F_{private} \subset \F$ is the set of private users.

{In this example we consider similar conditions to those  in Example 1, but now User 2 is non-private. Figure \ref{fig:example3} shows $p^*$ and $d^*$ vs the delay deadline of User 1.  It can be seen, that User 1 consistently gets higher throughput than in a fully private network (compare Figures \ref{fig:ex3optp} and \ref{fig:ex1optp}).}

\begin{figure}[!t]
\centering
\subfloat[Optimal throughputs]{
  \includegraphics[width=0.48\columnwidth]{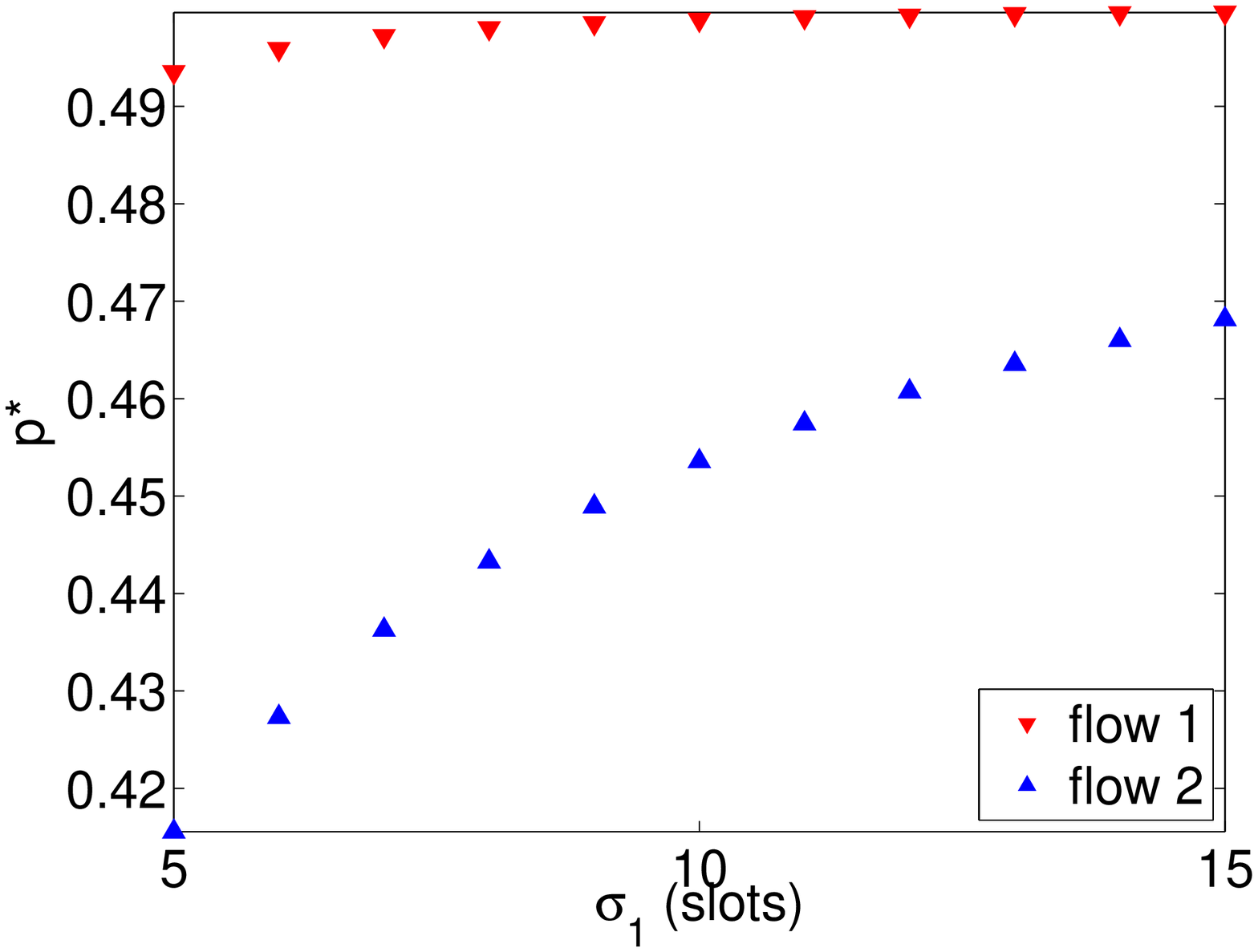}
	\label{fig:ex3optp}
}
\subfloat[Optimal dummy rates]{
  \includegraphics[width=0.48\columnwidth]{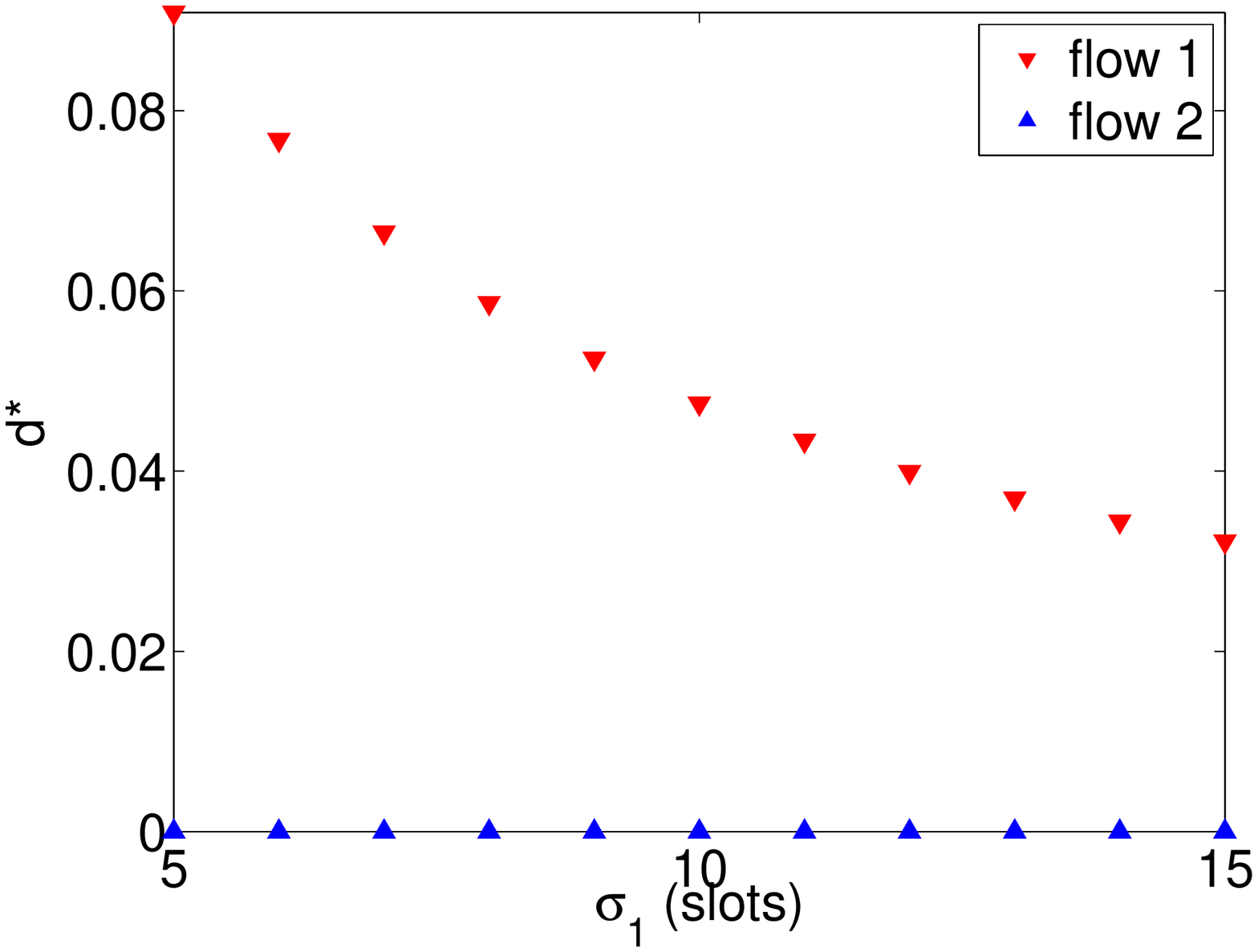}
	\label{fig:ex3optd}
} \caption{{Illustrating optimal throughput and dummy rate for a mix of private and non-private users. User 1 is private with $\sigma_1 = \{5, \dots, 15\}$ while User 2 ignores privacy by transmitting no dummy packets and using  the full cycle length for transmitting information packets.}}
\label{fig:example3}
\end{figure}

\section{Summary and Conclusions}
\noindent
%In this paper, we introduce a defense against timing only traffic analysis attacks. The defense protects the user by transforming their packet arrival sequence into one which contains no information about the packet arrival pattern of the original sequence. The transformation however imposes a delay on transmission and consumes bandwidth by transmitting dummy traffic.  We address the resulting tradeoff between user privacy and quality of experience by determining the proportional fair rate allocation in a fully private shared network.} %The implementation of such defence would be of importance} for VPN designers.
{In this paper, we introduce a rate allocation scheme for private shared networks. First, a defence is proposed against timing only traffic analysis attacks which protects the user by transforming their packet arrival time sequence into one which contains no information about the packet arrival pattern of the original sequence. The transformation however imposes a delay on transmission and consumes bandwidth by transmitting dummy traffic. We address a shared network scenario where the performance of one user can affect the network experience of another. This leads to a further analysis of the resulting trade-off between user privacy and quality of experience and to the design of a proportional fair rate allocation algorithm.
}

\bibliography{references}{}
\bibliographystyle{plain}

% that's all folks
\end{document}